\newcommand{\degree}{\text{deg}}
\newcommand{\LR}{\text{LMR}}
\newcommand{\LB}{\text{LMR}^{\text{HBR}}}
\newcommand{\LO}{\text{LMR}^{\text{SP}}}
\newcommand{\GR}{\text{GEO}}
\newcommand{\GB}{\text{GEO}^{\text{HBR}}}
\newcommand{\GO}{\text{GEO}^{\text{SP}}}
\newcommand{\BR}{\text{HBR}}
\newcommand{\OR}{\text{SP}}
\definecolor{c000}{rgb}{0.50,1.00,0.70}
\definecolor{c001}{rgb}{0.51,1.00,0.70}
\definecolor{c002}{rgb}{0.52,1.00,0.70}
\definecolor{c003}{rgb}{0.53,1.00,0.70}
\definecolor{c004}{rgb}{0.54,1.00,0.70}
\definecolor{c005}{rgb}{0.55,1.00,0.70}
\definecolor{c006}{rgb}{0.56,1.00,0.70}
\definecolor{c007}{rgb}{0.57,1.00,0.70}
\definecolor{c008}{rgb}{0.58,1.00,0.70}
\definecolor{c009}{rgb}{0.59,1.00,0.70}
\definecolor{c010}{rgb}{0.60,1.00,0.70}
\definecolor{c011}{rgb}{0.61,1.00,0.70}
\definecolor{c012}{rgb}{0.62,1.00,0.70}
\definecolor{c013}{rgb}{0.63,1.00,0.70}
\definecolor{c014}{rgb}{0.64,1.00,0.70}
\definecolor{c015}{rgb}{0.65,1.00,0.70}
\definecolor{c016}{rgb}{0.66,1.00,0.70}
\definecolor{c017}{rgb}{0.67,1.00,0.70}
\definecolor{c018}{rgb}{0.68,1.00,0.70}
\definecolor{c019}{rgb}{0.69,1.00,0.70}
\definecolor{c020}{rgb}{0.70,1.00,0.70}
\definecolor{c021}{rgb}{0.71,1.00,0.70}
\definecolor{c022}{rgb}{0.72,1.00,0.70}
\definecolor{c023}{rgb}{0.73,1.00,0.70}
\definecolor{c024}{rgb}{0.74,1.00,0.70}
\definecolor{c025}{rgb}{0.75,1.00,0.70}
\definecolor{c026}{rgb}{0.76,1.00,0.70}
\definecolor{c027}{rgb}{0.77,1.00,0.70}
\definecolor{c028}{rgb}{0.78,1.00,0.70}
\definecolor{c029}{rgb}{0.79,1.00,0.70}
\definecolor{c030}{rgb}{0.80,1.00,0.70}
\definecolor{c031}{rgb}{0.81,1.00,0.70}
\definecolor{c032}{rgb}{0.82,1.00,0.70}
\definecolor{c033}{rgb}{0.83,1.00,0.70}
\definecolor{c034}{rgb}{0.84,1.00,0.70}
\definecolor{c035}{rgb}{0.85,1.00,0.70}
\definecolor{c036}{rgb}{0.86,1.00,0.70}
\definecolor{c037}{rgb}{0.87,1.00,0.70}
\definecolor{c038}{rgb}{0.88,1.00,0.70}
\definecolor{c039}{rgb}{0.89,1.00,0.70}
\definecolor{c040}{rgb}{0.90,1.00,0.70}
\definecolor{c041}{rgb}{0.91,1.00,0.70}
\definecolor{c042}{rgb}{0.92,1.00,0.70}
\definecolor{c043}{rgb}{0.93,1.00,0.70}
\definecolor{c044}{rgb}{0.94,1.00,0.70}
\definecolor{c045}{rgb}{0.95,1.00,0.70}
\definecolor{c046}{rgb}{0.96,1.00,0.70}
\definecolor{c047}{rgb}{0.97,1.00,0.70}
\definecolor{c048}{rgb}{0.98,1.00,0.70}
\definecolor{c049}{rgb}{0.99,1.00,0.70}
\definecolor{c050}{rgb}{1.00,1.00,0.70}
\definecolor{c051}{rgb}{1.00,0.99,0.70}
\definecolor{c052}{rgb}{1.00,0.98,0.70}
\definecolor{c053}{rgb}{1.00,0.97,0.70}
\definecolor{c054}{rgb}{1.00,0.96,0.70}
\definecolor{c055}{rgb}{1.00,0.95,0.70}
\definecolor{c056}{rgb}{1.00,0.94,0.70}
\definecolor{c057}{rgb}{1.00,0.93,0.70}
\definecolor{c058}{rgb}{1.00,0.92,0.70}
\definecolor{c059}{rgb}{1.00,0.91,0.70}
\definecolor{c060}{rgb}{1.00,0.90,0.70}
\definecolor{c061}{rgb}{1.00,0.89,0.70}
\definecolor{c062}{rgb}{1.00,0.88,0.70}
\definecolor{c063}{rgb}{1.00,0.87,0.70}
\definecolor{c064}{rgb}{1.00,0.86,0.70}
\definecolor{c065}{rgb}{1.00,0.85,0.70}
\definecolor{c066}{rgb}{1.00,0.84,0.70}
\definecolor{c067}{rgb}{1.00,0.83,0.70}
\definecolor{c068}{rgb}{1.00,0.82,0.70}
\definecolor{c069}{rgb}{1.00,0.81,0.70}
\definecolor{c070}{rgb}{1.00,0.80,0.70}
\definecolor{c071}{rgb}{1.00,0.79,0.70}
\definecolor{c072}{rgb}{1.00,0.78,0.70}
\definecolor{c073}{rgb}{1.00,0.77,0.70}
\definecolor{c074}{rgb}{1.00,0.76,0.70}
\definecolor{c075}{rgb}{1.00,0.75,0.70}
\definecolor{c076}{rgb}{1.00,0.74,0.70}
\definecolor{c077}{rgb}{1.00,0.73,0.70}
\definecolor{c078}{rgb}{1.00,0.72,0.70}
\definecolor{c079}{rgb}{1.00,0.71,0.70}
\definecolor{c080}{rgb}{1.00,0.70,0.70}
\definecolor{c081}{rgb}{1.00,0.69,0.70}
\definecolor{c082}{rgb}{1.00,0.68,0.70}
\definecolor{c083}{rgb}{1.00,0.67,0.70}
\definecolor{c084}{rgb}{1.00,0.66,0.70}
\definecolor{c085}{rgb}{1.00,0.65,0.70}
\definecolor{c086}{rgb}{1.00,0.64,0.70}
\definecolor{c087}{rgb}{1.00,0.63,0.70}
\definecolor{c088}{rgb}{1.00,0.62,0.70}
\definecolor{c089}{rgb}{1.00,0.61,0.70}
\definecolor{c090}{rgb}{1.00,0.60,0.70}
\definecolor{c091}{rgb}{1.00,0.59,0.70}
\definecolor{c092}{rgb}{1.00,0.58,0.70}
\definecolor{c093}{rgb}{1.00,0.57,0.70}
\definecolor{c094}{rgb}{1.00,0.56,0.70}
\definecolor{c095}{rgb}{1.00,0.55,0.70}
\definecolor{c096}{rgb}{1.00,0.54,0.70}
\definecolor{c097}{rgb}{1.00,0.53,0.70}
\definecolor{c098}{rgb}{1.00,0.52,0.70}
\definecolor{c099}{rgb}{1.00,0.51,0.70}
\definecolor{c100}{rgb}{1.00,0.50,0.70}
\begin{document}

\title{
\vspace*{-2cm}
Hierarchical Bipartition Routing for delivery guarantee in sparse wireless ad hoc sensor networks with obstacles}
\author{Daniel Gau{\ss}mann \and Stefan Hoffmann \and Egon Wanke}
\institute{Institute of Computer Science, Heinrich-Heine-Universit\"at D\"usseldorf, D-40225 D\"usseldorf, Germany}
\date{}
\maketitle

\begin{abstract}
We introduce and evaluate a very simple landmark-based network partition technique called {\em Hierarchical Bipartition Routing} ($\BR$) to support routing with delivery guarantee in wireless ad hoc sensor networks. It is a simple routing protocol that can easily be combined with any other greedy routing algorithm to obtain delivery guarantee. The efficiency of $\BR$ increases if the network is sparse and contains obstacles. The space necessary to store the additional routing information at a node $u$ is on average not larger than the size necessary to store the IDs of the neighbors of $u$. The amount of work to setup the complete data structure is on average proportional to flooding the entire network $\log_2(n)$ times, where $n$ is the total number of sensor nodes. We evaluate the performance of $\BR$ in combination with two simple energy-aware geographic greedy routing algorithms based on physical coordinates and virtual coordinates, respectively. Our simulations show that the difference between using $\BR$ and a weighted shortest path to escape a dead-end is only a few percent in typical cases.
\end{abstract}

\section{Introduction and related work}

A {\em wireless ad hoc sensor network} is a decentralized network not relying on a preexisting infrastructure in that the nodes operate on limited hardware (memory and energy) and can only interchange packets within a radio range. The number of deployed nodes could be very large. Wireless ad hoc sensor networks are receiving a lot of attention in recent years due to their potential applications in various areas such as monitoring, security and data gathering.

{\em Routing} is the process of sending a packet from one or more source nodes to one or more target nodes through intermediate nodes. Each node decides locally to which neighbor the packet is forwarded. The decision is determined by the routing algorithm based on the ID of the destination node, the local topology (and possibly geometry) of the network, extra information stored in each node about the routes (the routing tables) and information contained in the packet itself.

A {\em routing protocol} defines the rules for exchanging the information between nodes.
In {\em geographic routing} protocols the decision to which neighbor the packet is sent is controlled by the position of the nodes and the distances between them. The position information to each node can be obtained either by devices such as GPS or Galileo (geographic coordinates) or by analyzing the network structure (virtual coordinates). Position awareness can often significantly improve the efficiency of routing. In \cite{MRSS08} it is mentioned that protocols using position information for routing like MFR \cite{TK84}, COP \cite{Sto06}, and GFG \cite{FS06} are competitive alternatives to the classical routing protocols for wireless ad hoc networks as for example DSR \cite{JMB01}, AODV \cite{PBD03}, and OLSR \cite{CJLMQV03}). 

Most algorithms based on position awareness first try to deliver the packet using greedy techniques. For example, the simplest greedy routing technique \cite{Fin87} will forward the packet to the neighbor closest to the destination. {\em Most Forward Routing} (MFR) \cite{TK84} and {\em Nearest with Forwarding Progress} (NFP) \cite{HL86} consider the projected distance on the source-destination line. MFR tries to get closer to the destination by sending the packet to the neighbor with maximum projected distance, while NFP suggests to adjust the transmission power by sending the packet to the neighbor with the smallest projected distance.

Greedy routing algorithms can easily be extended by taking into account power consumption. Power aware greedy routing in most cases tries to minimize the ratio of the energy consumed by a transmission to the progress made. The progress is the distance reduction towards the destination. This {\em cost over progress} (COP) power-aware framework is first introduced in \cite{KNS06}. If the cost is equal for all connections, we obtain the simple greedy algorithm as already discussed above \cite{Fin87}. If the cost of a connection is proportional to the distance between the nodes, the resulting routing is similar to compass routing \cite{KSU99}.

In general, greedy routing algorithms do not guarantee packet delivery. A packet can be trapped in a local minimum where the algorithm will fail to find a next neighbor. The probability of reaching a so-called {\em dead-end} increases if the network is less dense or if the network contains obstacles where no nodes can be placed and/or connections are truncated by obstacles.

There are several attempts to obtain delivery guarantee for greedy routing algorithms. The authors of \cite{BMSU01} propose {\em face routing}, which guarantees delivery in two-dimensional unit disk graphs (UDG). Face routing is applied to a planar sub-network obtained by considering the Gabriel Graph \cite{BMSU01,LGSM04}, the Relative Neighborhood Graph \cite{Tou80}, or the Morelia Graph \cite{BCGKOSU04}. In \cite{SD04} a greedy-face-greedy (GFG) approach is considered, where greedy routing is based on COP as in \cite{Sto06} and face routing is similar to the one in \cite{BMSU01}. Energy-aware routing is also proposed in LEARN \cite{WSWLD06}, SPFSP \cite{SR06}, End-to-End (EtE) \cite{EMS08}, and EEGR \cite{ZS07}.

Landmark-based routing algorithms like VCap \cite{CCDU05}, JUMPS \cite{BPDCFS06}, GLIDER \cite{FGGSZ05}, VCost \cite{EMS07}, and BVR \cite{FRZ05} use virtual coordinates computed from the distances to specific nodes called {\em landmarks}, {\em anchors}, or {\em beacons}. In the first phase, a global and distributed election mechanism elects a set of nodes acting as landmarks. Then the landmarks flood the entire network or only parts of the network such that every node can compute its virtual coordinate depending on the distances to the landmarks. The virtual coordinates can then be used to route a message greedily through the network. Packet delivery is also not guaranteed if different nodes have the same virtual coordinates.

There are also several attempts to obtain delivery guarantee for landmark-based greedy algorithms. Most of them are based on a tree coordinate system like LTP \cite{CMT07} and ABVCap \cite{LA08}. An energy efficient approach is introduced in HECTOR \cite{MRSS08}. This protocol mixes the use of the tree-based coordinate system of \cite{CMT07} and the landmark-based coordinate system of \cite{CCDU05} and \cite{EMS07}.

An alternative way for delivery guarantee can be obtained by hierarchical addressing, see for example \cite{Tsu88}. Tsuchiya solves this problem by allowing nodes to self– configure their addresses. The protocol uses a hierarchical set of landmark nodes that periodically send scoped route discovery messages. A node's address is the concatenation of its closest landmark at each level in the hierarchy. The overhead of route setup can be reduced to $O(n \log n)$ and nodes only hold state for their immediate neighbors and their next hop to each landmark. However, this requires a protocol that creates and maintains this hierarchy of landmarks and appropriately tunes the landmark scopes. Recent proposals adopting this approach have been fairly complex \cite{KAE00} in contrast to our design goal of configuration simplicity, see also \cite{IS09} for an overview.

A second alternative to obtain delivery guarantee is clustering which is proposed by various researchers as for example in \cite{CWLG97} and \cite{KCVP95}. A closely related approach is the construction of connected dominating sets as routing backbones \cite{W02}.

In this paper, we introduce and evaluate a very simple landmark-based network partition technique called {\em Hierarchical Bipartition Routing} ($\BR$) to support routing with delivery guarantee in wireless ad hoc sensor networks. It is a simple routing protocol that can easily be combined with any other greedy routing algorithm to obtain delivery guarantee. The efficiency of $\BR$ increases if the network is sparse and contains obstacles. These are exactly the networks where greedy algorithms will fail with high probability. The hierarchical bipartition of the network is performed on a landmark-based data structure setup in a pre-processing phase.

The space necessary to store the additional routing information at a node $u$ is approximately $\log_2(n) \cdot \log_2(\deg(u))$ bits on average, where $n$ is the total number of sensor nodes in the network and $\deg(u)$ is the number of neighbors of $u$. This is in general not larger than the size necessary to store the IDs of all neighbors of $u$. The amount of work to setup the complete data structure is on average proportional to flooding the entire network $\log_2(n)$ times. The sizes of the virtual addresses are on average only a few bits larger than the sizes of the IDs.

We evaluate the performance of $\BR$ in combination with two simple energy-aware geographic greedy routing algorithms based on physical coordinates and virtual coordinates, respectively. Our simulations show that the difference between using $\BR$ and a weighted shortest path to escape a dead-end is only a few percent in typical cases.

This paper is organized as follows. In the next section, we define $\BR$. After that in Section 3, we describe two simple greedy routing protocols and how they can escape from a dead-end using $\BR$ or a weighted shortest path. In section 4, we evaluate the performance of $\BR$. Conclusions are given in Section 5.

\section {Hierarchical Bipartition Routing ($\BR$)}

A {\em network} is modeled as an {\em undirected graph} $G=(V,E)$, where $V$ is the set of {\em nodes} ({\em sensor nodes}) and $E \subseteq \{\{u,v\}~|~u,v \in V,~u \not=v\}$ is the set of undirected {\em edges} ({\em connections}) between sensor nodes. A connection $e=\{u,v\}$ may have a positive {\em weight} $\omega(e) \in {\mathbb R}_+$, also denoted by $\omega (u,v)$ or $\omega (v,u)$. In general, the weight represents the amount of energy necessary to reach the neighbor node. A {\em path} $p= u_1,\ldots,u_k$ is a non-empty sequence of nodes $u_i \in V$, $1 \leq i \leq k$, such that $\{u_i,u_{i+1}\} \in E$, $1 \leq i < k$, is a connection between $u_i$ and $u_{i+1}$. The {\em weight} of $p$ is \[\omega(p)= \sum_{i=1}^{k-1} \omega(u_i,u_{i+1}).\] Path $p$ is a shortest path between $u$ and $v$, if there is no path $p'$ between $u$ and $v$ with $\omega(p') < \omega(p)$.

The $\omega$-{\em distance} $d^{\omega}(u,v)$ between two nodes is the weight of a shortest path between $u$ and $v$.

The {\em hop distance} $d^h(u,v)$ between two nodes is the weight of a shortest path between $u$ and $v$ for the case that all connections $\{u,v\}$ have weight $\omega(u,v)=1$.

To analyze the performance of $\BR$, we assume that every node $u \in V$ has a physical geographic position $(u_x,u_y) \in {\mathbb R}^2$ in the plane defined by a two-dimensional real vector. These positions are only used by the geographic greedy routing protocols, and not by the hierarchical bipartition technique. The {\em euclidean distance} between two nodes $u$ and $v$ is \[d^e(u,v) = \sqrt{(u_x-v_x)^2+(u_y-v_y)^2}.\] Note that the $\omega$-distance $d^\omega(u,v)$ and the hop distance $d^h(u,v)$ are defined by the network structure, whereas the euclidean distance is defined by the physical geographic positions of the nodes. 

A network $G=(V,E)$ is {\em connected} if there is a path between every pair of nodes. Network $G'=(V',E')$ is a {\em sub-network} of $G$ if $V \subseteq V$ and $E' \subseteq E$, it is an {\em induced sub-network} of $G$ if $V' \subseteq V$ and $E' = \{\{u,v\}~|~\{u,v\} \in E,~u,v\in V'\}$.

The nodes are assumed to be static. Each node is assumed to have a unique ID which is mainly used to break ties. The unique ID is also necessary to specify the target node of the packet, if we do not want to give every node a unique virtual address.

\subsection {Initialization}

The necessary data structure is built in several phases. In the first phase, an arbitrary node $w$ and two {\em landmark nodes} $x_0$, $x_1$ are selected. Landmark node $x_0$ is one of the nodes that has maximum $\omega$-distance to $w$, and landmark node $x_1$ is one of the nodes that has maximum $\omega$-distance to $x_0$. Every node $u$ with \[d^{\omega}(u,x_0) \leq d^{\omega}(u,x_1) \text{ ~~~ gets virtual address ~~~ } 0,\] and every node $u$ with \[d^{\omega}(u,x_0) > d^{\omega}(u,x_1) \text{ ~~~ gets virtual address ~~~ } 1.\]

Let $G_0$ be the network induced by the nodes with virtual address 0 and let $G_1$ be the network induced by the nodes with virtual address 1.

\begin{lemma}
\label{L1}
If network $G$ is connected, then $G_0$ and $G_1$ are connected.
\end{lemma}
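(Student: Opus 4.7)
The plan is to show, for each $i\in\{0,1\}$, that every node of $G_i$ can be joined inside $G_i$ to its landmark $x_i$, by proving that any $\omega$-shortest path in $G$ from such a node to $x_i$ lies entirely in $G_i$. Since $G$ is connected, such a shortest path exists, and once it is shown to stay in $G_i$ we are done.

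For $G_0$, fix $u$ with $d^\omega(u,x_0)\le d^\omega(u,x_1)$ and let $P = u = v_0,\ldots,v_k = x_0$ be a shortest $\omega$-path in $G$. For each intermediate $v_j$, the shortest-path identity gives $d^\omega(v_j,x_0) = d^\omega(u,x_0) - d^\omega(u,v_j)$, while the triangle inequality yields $d^\omega(v_j,x_1) \ge d^\omega(u,x_1) - d^\omega(u,v_j)$. Combined with the assumption $d^\omega(u,x_1) \ge d^\omega(u,x_0)$, this gives $d^\omega(v_j,x_1) \ge d^\omega(v_j,x_0)$, so $v_j$ has address $0$ and $P$ lies in $G_0$.

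For $G_1$ the argument is symmetric, but the defining inequality $d^\omega(u,x_0) > d^\omega(u,x_1)$ is strict. Running the same computation along a shortest path from $u$ to $x_1$ produces $d^\omega(v_j,x_0) \ge d^\omega(u,x_0) - d^\omega(u,v_j) > d^\omega(u,x_1) - d^\omega(u,v_j) = d^\omega(v_j,x_1)$, so $v_j$ has address $1$. The only degenerate case is $x_0 = x_1$, which can only occur in a single-node network; then $G_1$ is empty and the claim is vacuous.

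The main subtlety I expect is keeping the strict versus non-strict inequalities aligned across the two cases: for $G_1$ the address condition is strict, and one must verify that strictness survives the triangle-inequality manipulation. It does, because the shortest-path identity contributes an equality (the same quantity $d^\omega(u,v_j)$ is subtracted from both sides), while the triangle-inequality bound for the other landmark preserves the strict comparison. No further structural properties of $G$ beyond connectedness enter the argument, and in particular the specific choice of landmarks as diameter-like endpoints plays no role in this lemma.
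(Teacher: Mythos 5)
Your proof is correct and follows essentially the same route as the paper: both argue that a shortest $\omega$-path from a node of $G_i$ to the landmark $x_i$ stays inside $G_i$, combining the shortest-path decomposition with the triangle inequality toward the other landmark (the paper phrases it as a one-step neighbor argument iterated along the path, you state it directly for every node on the path, and you additionally note the strictness for $G_1$ and the degenerate case $x_0=x_1$, which the paper leaves implicit).
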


\begin{proof}
Let $v$ be a neighbor of a node $u$ on a shortest path between $u$ and $x_0$, then
\[\omega(u,v) + d^\omega(v,x_0) = d^\omega(u,x_0).\]
If the virtual address of $u$ is 0, then
\[d^\omega(u,x_0) \leq d^\omega(u,x_1).\]
Since
\[d^\omega(u,x_1) \leq \omega(u,v) + d^\omega(w,x_1),\]
we get
\[d^\omega(v,x_0) \leq d^\omega(v,x_1).\]
That is, all nodes $v$ on a shortest path between $u$ and $x_0$ have virtual address $0$ and thus $G_0$ is connected. An analogous argumentation shows that $G_1$ is connected. $\Box$
\end{proof}

In the next phase, we select for every virtual address $\alpha \in \{0, 1\}$ two landmark nodes $x_{\alpha \cdot 0}$, $x_{\alpha \cdot 1}$ from the connected sub-network $G_{\alpha}$. Here $\alpha \cdot 0$ and $\alpha \cdot 1$ is the extension of $\alpha$ by symbol 0 or 1, respectively. Landmark node $x_{\alpha \cdot 0}$ is one of the nodes of network $G_{\alpha}$ that has maximum $\omega$-distance to $x_{\alpha}$ in sub-network $G_{\alpha}$, and landmark node $x_{\alpha \cdot 1}$ is one of the nodes of network $G_{\alpha}$ that has maximum $\omega$-distance to $x_{\alpha \cdot 0}$ in sub-network $G_{\alpha}$. A node $u$ of $G_{\alpha}$ whose $\omega$-distance to $x_{\alpha \cdot 0}$ is less than or equal to the $\omega$-distance between $u$ and $x_{\alpha \cdot 1}$ gets virtual address $\alpha \cdot 0$. It gets virtual address $\alpha \cdot 1$, if the distance between $u$ and $x_{\alpha \cdot 0}$ is greater than the distance between $u$ and $x_{\alpha \cdot 1}$.

The bipartition of every $G_{\alpha}$ into two further sub-networks $G_{\alpha \cdot 0}$ and $G_{\alpha \cdot 1}$ can be continued with the new virtual addresses $\alpha$ until all the created sub-networks consist of only one single node. In this case, the nodes of the network are uniquely identified by the virtual addresses. An inductive application of Lemma \ref{L1} shows that all the sub-networks $G_{\alpha}$ are connected.

\subsection {Distributed address computation}

The hierarchical bipartition can easily be computed by a distributed algorithm. Once the network is deployed, an arbitrarily selected node $w$ starts flooding the network. The message carries a weight initialized to zero. The weight is increased by every forwarding node. If node $u$ sends a message to a neighbor $v_i$ then $u$ increases the weight by $\omega(u,v_i)$. If a node receives more than one message, it will store and forward only the one with the smaller weight. If a node does not receive a new message for a while, the current weight represents the $\omega$-distance to $w$. To be sure that the flooding is finished, the node has to wait for a time longer than the time required to propagate a message through the network. To reduce the overhead during the distance computation, the messages should be sent to the neighbors $v_1,\ldots,v_m$ in ascending order with respect to the costs at the edges, that is, $\omega(u,v_1) \leq \ldots \leq \omega(u,v_m)$.

The election of the landmark nodes for the bipartition of the network can be done by the following simple protocol. Assume we want to determine a unique node $u$ with maximum $\omega$-distance to some other node $w$. Then all nodes with a maximum $\omega$-distance to $w$ in its two-hop neighborhood (in case of parity the nodes with maximum IDs) start sending a message back to $w$. (The route back to $w$ can be stored during the update process of the $\omega$-distance to $w$.) Node $w$ receives all these messages and can select the node $u$ with maximum distance to $w$. It has to wait for a while such that no further messages will arrive. Then it sends a message back to the winner, the node $u$ with maximum distance to $w$.

\subsection {Routing protocol}

The routing protocol is quite simple and straightforward. Assume a packet should be sent from a source node $s$ to a target node $t$. If the virtual address of $s$ starts with symbol 0 and the virtual address of $t$ starts with symbol 1, then $s$ is in sub-network $G_0$ and $t$ is in sub-network $G_1$. In this case, the packet is sent step by step to a neighbor whose distance to landmark node $x_{1}$ is minimum until it reaches a node in $G_1$. Then it is routed within the connected sub-network $G_1$ using the second symbol of the virtual addresses, and so on.

More generally, let $\alpha \cdot d_u \cdot \alpha_u$ be the virtual address of the current node $u$ and $\alpha \cdot d_t \cdot \alpha_t$ be the virtual address of the destination $t$ such that $\alpha, \alpha_u, \alpha_t \in \{0, 1\}^\star$, $d_u, d_t \in \{0, 1\}$, and $d_u \not = d_t$. That is, the symbols left to $d_u$ and left to $d_t$ are equal in both virtual addresses. If $d_u= 0$ and $d_t = 1$, the packet is sent greedily towards landmark node $x_{\alpha \cdot 1}$, if $d_u= 1$ and $d_t = 0$, the packet is sent greedily towards landmark node $x_{\alpha \cdot 0}$. The packet does not leave the connected sub-network $G_{\alpha}$. An inductive argumentation proves that $\BR$ guarantees delivery.

\begin{corollary}
A packet sent with $\BR$ always reaches its destination.
\end{corollary}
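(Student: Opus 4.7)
The plan is to argue by induction on the length of the longest common prefix $\beta$ shared between the virtual address of the packet's current holder $u$ and that of the destination $t$. If the two addresses coincide then, since the initialization yields distinct virtual addresses for distinct nodes, we have $u=t$ and the packet has arrived. Otherwise the addresses can be written as $\beta \cdot d_u \cdot \alpha_u$ and $\beta \cdot d_t \cdot \alpha_t$ with $d_u, d_t \in \{0,1\}$ and $d_u \neq d_t$, so that $u \in G_{\beta \cdot d_u}$ and $t \in G_{\beta \cdot d_t}$ while both nodes still lie in the common sub-network $G_\beta$.

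The first step is to invoke the inductive application of Lemma~\ref{L1} already mentioned in the paper to conclude that $G_\beta$ itself is connected. The protocol then enters a greedy phase that, at each hop, forwards the packet inside $G_\beta$ to a neighbor minimizing the $\omega$-distance to the landmark $x_{\beta \cdot d_t}$ and continues until the packet first reaches a node whose virtual address has prefix $\beta \cdot d_t$. The heart of the argument is the claim that this phase always terminates: because $G_\beta$ is connected, every node $v \in G_\beta$ distinct from $x_{\beta \cdot d_t}$ admits a shortest path in $G_\beta$ to $x_{\beta \cdot d_t}$, and the first edge of that path leads to a neighbor in $G_\beta$ whose $\omega$-distance to $x_{\beta \cdot d_t}$ is strictly smaller. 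Hence the greedy rule is always well defined and the distance to the landmark strictly decreases at every hop. In the worst case the packet eventually reaches $x_{\beta \cdot d_t}$ itself, which belongs to $G_{\beta \cdot d_t}$; more typically it crosses into $G_{\beta \cdot d_t}$ sooner. Either way it arrives at a node whose address properly extends the prefix $\beta \cdot d_t$, and the common prefix with $t$ has strictly grown.

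Since virtual addresses are finite binary strings of bounded length and since each phase increases the common prefix by at least one symbol, only finitely many phases can occur, after which the current holder must coincide with $t$. I do not anticipate a substantive obstacle; the point that deserves most care is simply that the greedy rule be evaluated \emph{within} $G_\beta$ with respect to $\omega$-distances computed in that sub-network, which is exactly what the protocol description prescribes and what makes Lemma~\ref{L1} the appropriate progress tool for each phase.
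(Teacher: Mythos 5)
Your argument is correct and follows essentially the same route as the paper: induction over the common address prefix, using the inductive application of Lemma~\ref{L1} to keep each greedy phase inside a connected sub-network $G_\beta$ and routing towards the landmark $x_{\beta\cdot d_t}$. You merely make explicit the details the paper compresses into ``an inductive argumentation'' — namely that the strictly positive edge weights force the $\omega$-distance to the landmark to decrease at every hop, so each phase terminates (at latest at $x_{\beta\cdot d_t}$ itself) and the common prefix strictly grows.
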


\subsection {Address size}

The sizes $|\alpha|$ of the virtual addresses $\alpha$ depend on the number of partitions necessary to obtain single node gra\-phs. In this case, the virtual addresses are unique and can be used for the identification of the nodes. If the weights of all edges are equal, then a worst case for the address length is a complete network, that is, a network where every node is connected to all other nodes. Then the number of bipartitions and thus the address length is $n-1$. To avoid these worst-case situations, we can stop the bipartition process when all nodes of a sub-network $G_{\alpha}$ have hop distance $\leq 1$ to landmark node $x_{\alpha}$. This will considerably reduce the address size. Since now the nodes do not have unique virtual addresses, it is necessary to include the target ID in the packets.

The routing protocol can be extended as follows: Assume the packet reaches a node $u$ that has virtual target address $\alpha$ but is not the target node. If a neighbor $v$ of $u$ is the target node, the packet can be sent to node $v$. Otherwise, the packet can be sent to $x_{\alpha}$ and from $x_{\alpha}$ to the target node. This is always possible, because all nodes with virtual address $\alpha$ are neighbors of $x_{\alpha}$. The decision to stop the bipartition is very easy to implement, because the nodes of $G_{\alpha}$ that are candidates for flooding only have to check their list of neighbors.

If the weights of the connections are not all equal, but depend on the distances of the connections, then it is in general not necessary to abort the bipartition process. In practical cases, the different lengths of the connections yields to a partition of the network into two almost equally sized sub-networks. In this case, the virtual addresses are unique and it will not be necessary to use the original IDs.

\subsection {Storage size}

A node has to store its ID, its virtual address, a routing table, and temporarily during the initialization phase some $\omega$-distances to landmark nodes and some source IDs. For the partition of sub-network $G_{\alpha}$ into $G_{\alpha \cdot 0}$ and $G_{\alpha \cdot 1}$, we only need the $\omega$-distances to $x_{\alpha \cdot 0}$ and $x_{\alpha \cdot 1}$. When the new virtual addresses $\alpha \cdot 0$ and $\alpha \cdot 1$ are assigned, it is no longer necessary to store these $\omega$-distances, and the IDs of $x_{\alpha \cdot 0}$ and $x_{\alpha \cdot 1}$. For routing a packet it is sufficient to know for every position $i$, $1 \leq i \leq |\alpha|$, the neighbor to which the packet has to be sent if the own address $\alpha$ and the target address of the packet are equal at the first $i-1$ positions and differ at position $i$. If a node $u$ with virtual address $\alpha$ has $\degree(u)$ neighbors, then the size of the additional routing information is only $|\alpha| \cdot \log_2 \deg(u)$ bits.

\subsection {Worst case behavior}

From a theoretical point of view, the weight of a path routed by $\BR$ can be arbitrarily larger than a shortest path between the source and target node. Figure \ref{fig01} shows a simple example. The virtual addresses of the black and white nodes start with symbol 0 and 1, respectively. A shortest path between the source node $s$ and the target node $t$ has weight $2 \cdot a$. $\BR$ routing will send the packet from $s$ to $v$ and then via $x_1$ to $t$. The weight of this path is $m \cdot a$, where $m$ can be arbitrary large. However, this is not a typical case for randomly generated networks.  

\begin{figure}[hbt]
\center
\epsfig{figure=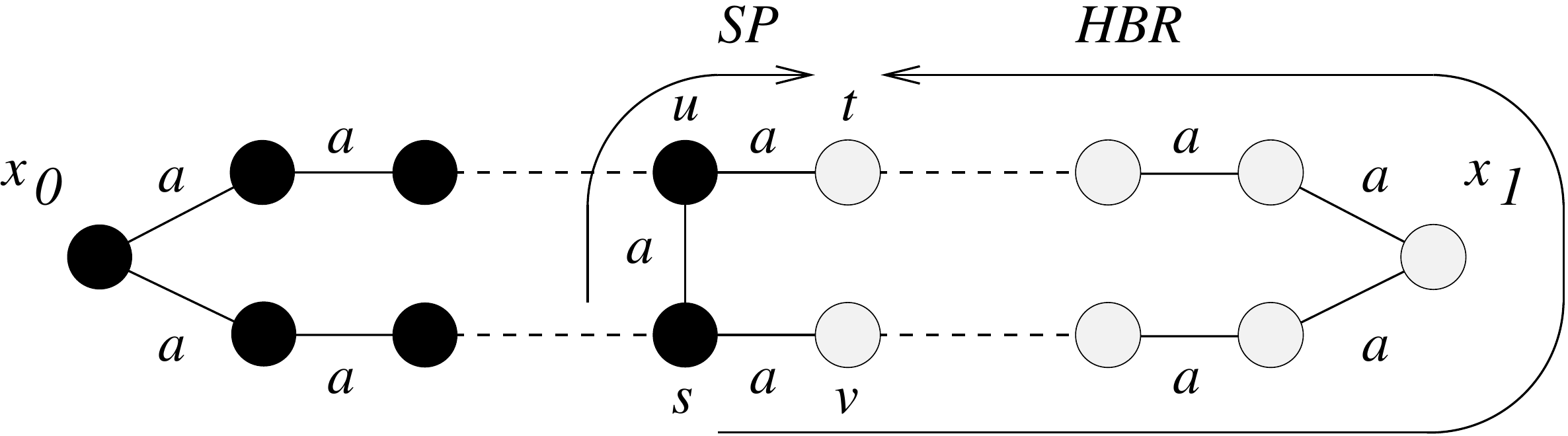,width=200pt} 
\caption{\label{fig01} An unrealistic worst case for the stretch factor of $\BR$}
\end{figure}

In the worst case, the size of the virtual addresses $\alpha$ can reach the number $n$ of nodes. This is for example the case if the network is complete and all edges have the same weight, or if the network is a path and the connections have exponentially increasing weights $1, 2, 4, 8, 16, 32, \ldots $. However, if the number of nodes reachable with increasing distance is for most nodes approximately the same, the size of the virtual addresses is at most $\log_2 n$. This is also confirmed by our experimental evaluations of randomly generated networks.

\begin{figure}[hbt]
\center
\epsfig{figure=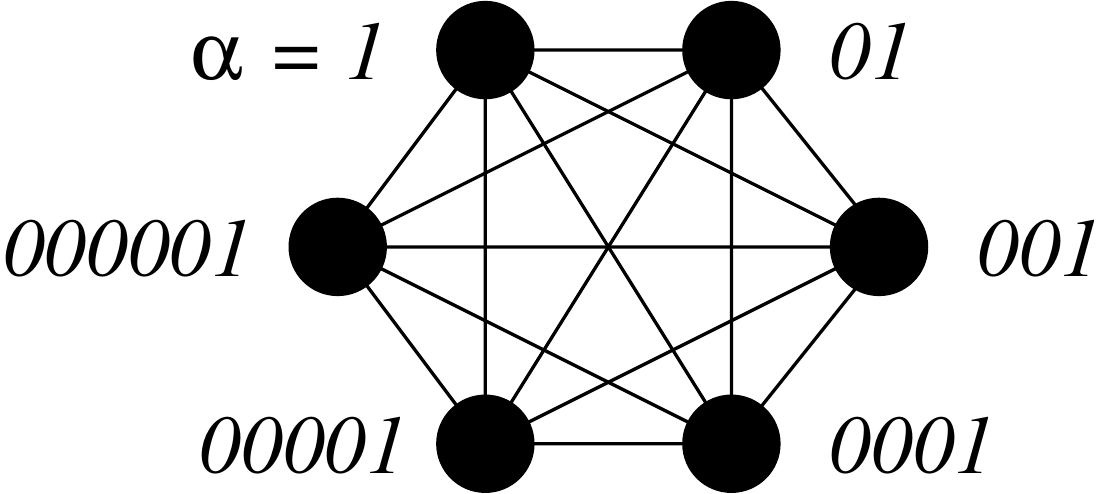,width=100pt}

\bigskip 
\bigskip 
\epsfig{figure=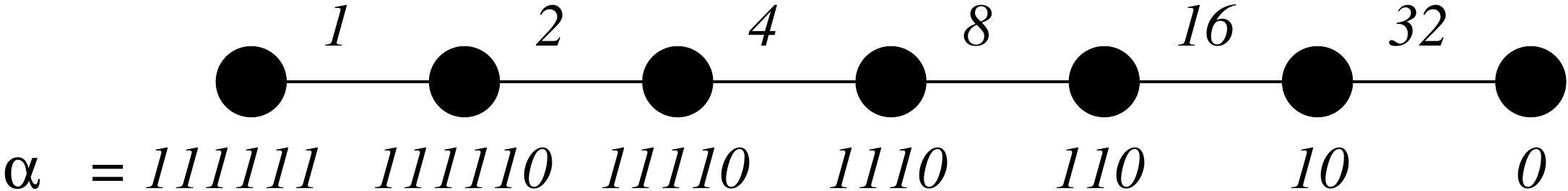,width=190pt} 
\caption{\label{fig02} Two unrealistic worst cases for the size of the virtual addresses}
\end{figure}

\section {Two greedy routing protocols}

We mainly want to use $\BR$ to guarantee delivery for greedy routing protocols. For this reason, we combine $\BR$ with geographic greedy routing based on physical and virtual coordinates.

\subsection {Physical coordinates}

As already mentioned in the introduction, the simplest energy-aware geographic greedy routing protocol sends the packet to a neighbor for which the ratio of cost over progress is minimum. This {\em cost over progress} (COP) power-aware framework is introduced in \cite{KNS06}. If we consider physical coordinates, the packet is sent from node $u$ to a neighbor $v$ of $u$ for which
\[d^e(v,t) < d^e(u,t)\] and
\[\frac{\omega(u,v)}{d^e(u,t)-d^e(v,t)}\]
is minimum. If $\omega(u,v) = a$, $a > 0$, for all connections $\{u,v\}$, we obtain the simple geographic greedy routing that selects a neighbor closest to the destination \cite{Fin87}. For $\omega(u,v) = d^e(u,v)$, the routing is similar to compass routing \cite{KSU99}. If the angle between $(u,v)$ and $(u,t)$ is $\beta$ and the distance to the target node tends to infinity, the ratio of cost over progress tends to $\frac{1}{\cos(\beta)}$. If $\omega(u,v)$ is defined by the commonly used energy function \[a+b \cdot {d^e(u,v)}^c,\] $a,b > 0$, $c \geq 2$, then an optimal routing tries to use equidistant steps towards the target node $t$. The best progress (also called the {\em characteristic distance}) is
\[d^{\star}= \sqrt[c]{\frac{a}{b\cdot(c-1)}},\]
see also \cite{SL01}. The ratio of cost over progress has its minimum at the position with distance $d^{\star}$ from $u$ in the direction to the destination.

\subsection {Virtual coordinates}

Our second greedy routing protocol is based on virtual coordinates which we will define by four landmark nodes denoted by $A$, $B$, $C$, and $D$. These four landmark nodes are selected similarly as in VCap (virtual coordinate assignment protocol) from \cite{CCDU05}. The first landmark node $A$ is one of the nodes with maximum $\omega$-distance to an arbitrary node $w$. The second landmark node $B$ is one of the nodes with maximum $\omega$-distance to $A$. The third landmark node $C$ is one of the nodes for which \[d^{\omega}(C,A)+d^{\omega}(C,B)-2 \cdot |d^{\omega}(C,A)-d^{\omega}(C,B)|\] is maximum. And finally, the fourth landmark node $D$ is one of the nodes for which \[d^{\omega}(D,C)-|d^{\omega}(D,A)-d^{\omega}(D,B)|\] is maximum. Since will consider energy efficient routing, we use the $\omega$-distances instead of hop distances \cite{CCDU05} for the computation of the virtual addresses.

In case of parity, the node with larger ID is chosen. The four landmark nodes $A,B,C,D$ define for every node $u$ a 4-tuple
\[(d^{\omega}(u,A),d^{\omega}(u,B),d^{\omega}(u,C),d^{\omega}(u,D)).\]

Our landmark-based routing protocol sends the packet to a neighbor $v$ for which the ratio cost over progress is minimum. The progress $d'(u,t)-d'(v,t)$ is defined by distance function
\[
d'(u,v) = \sqrt{
\begin{array}{l}
(d^{\omega}(u,A)-d^{\omega}(v,A))^2+
(d^{\omega}(u,B)-d^{\omega}(v,B))^2+ \\
(d^{\omega}(u,C)-d^{\omega}(v,C))^2+
(d^{\omega}(u,D)-d^{\omega}(v,D))^2
\end{array}}.\]

\subsection {Dead-end handling}

Both greedy routing protocols can reach a so-called {\em dead-end}, i.e., a node $u$ that has no neighbor $v$ closer to the destination than $u$. If a dead-end is reached, the packet is either sent along a shortest path or by $\BR$ towards the destination node. In both cases the weight function $\omega$ is applied. The packet is sent hop by hop until a node is reached whose distance to the destination is less than the distance from the last dead-end node to the destination. Then the original greedy routing is continued.

It is obvious that a shortest-path routing is not possible in practice. We use shortest-path routing only to get a comparison with $\BR$ under the assumption that following a shortest path is a good idea to get out of a dead-end.

The two geographic routing variants based on physical coordinates are denoted by $\GO$ and $\GB$, the two variants based on virtual coordinates are denoted by $\LO$ or $\LB$, depending on whether the dead-end problem is cleared with the help of a shortest path or by $\BR$, respectively.

\section {Analysis}

The analysis of $\BR$ is done by randomly generated networks and randomly selected source and target nodes. The test environment and the obtained evaluation results is explained in the next subsections.

\subsection {Experimental environment}

Our networks have a size of $1000 m$ $\times$ $1000 m$. The radio range is fixed at $50 m$, the node density $\delta$ varies between $0.5 \cdot 10^{-3}$ and $9.2 \cdot 10^{-3}$ nodes per $m^2$, which corresponds to an average node degree between $4$ and $72$. If one of the randomly created networks is disconnected, we use the largest connected component, if its size is at least $\frac{2}{3}$ of the size of the complete network.

Networks with holes or obstacles are created with the help of black/white-masks. If the randomly selected position of a node hits a white-entry of the mask, the node is omitted. We do not try to find another position for this node. The masks we use for our evaluations are shown in Figure \ref{fig03}, \ref{fig05}, and \ref{fig07}.

\begin{figure}[hbt]
\center
\hfill
\epsfig{figure=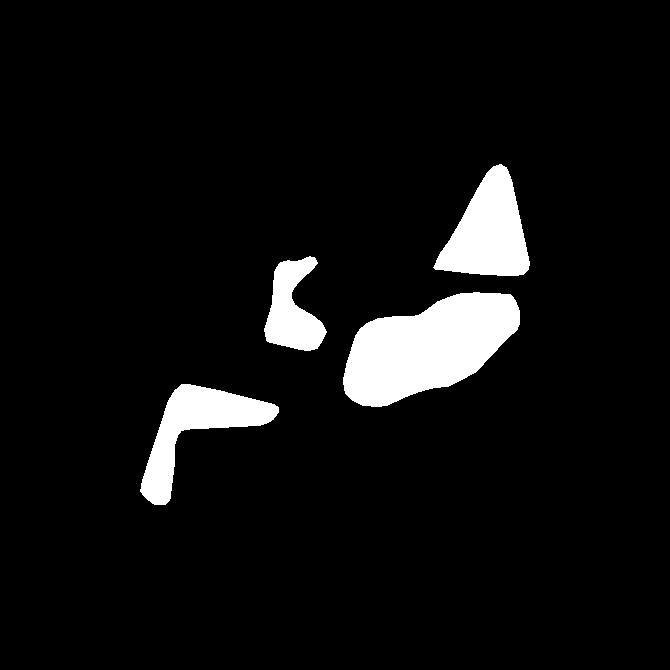,width=200pt} 
\hfill
\epsfig{figure=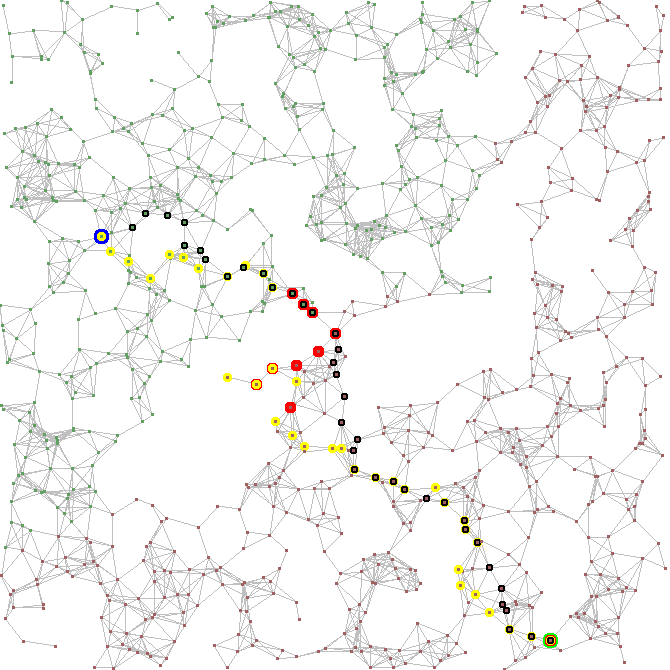,width=200pt} 
\hfill
\null
\caption{\label{fig03} Left: Lakes-mask, lat. $51.19^\circ$, lon. $6.37^\circ$; Right: Routing in a network generated with lakes-mask and density $\delta= 1.0 \cdot 10^{-3}$}
\end{figure}


The lakes-mask of Figure \ref{fig03} (latitude $51.19^\circ$, longitude $6.37^\circ$) represents wet areas where sensors are lost during the dispersion process. The streets-mask of Figure \ref{fig05} (latitude $40.70^\circ$, longitude $-73.93^\circ$) represents an area where the sensor nodes are assumed to be dispersed by vehicles driving along streets. The buildings-mask of Figure \ref{fig07} (latitude $52.50^\circ$, longitude $13.35^\circ$) represents an example of a metropolitan area. Here we additionally remove all connections between sensors that can not see each other, because there is a building in between.

\begin{figure}[hbt]
\center
\hfill
\epsfig{figure=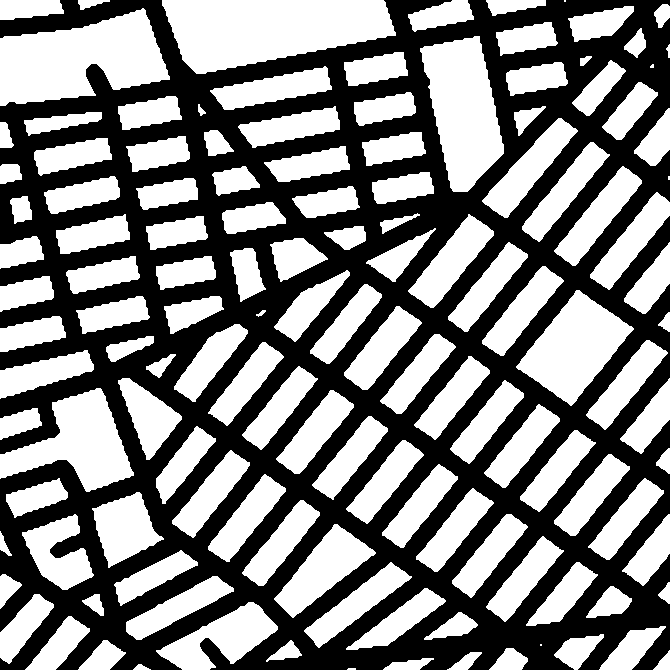,width=200pt}
\hfill
\epsfig{figure=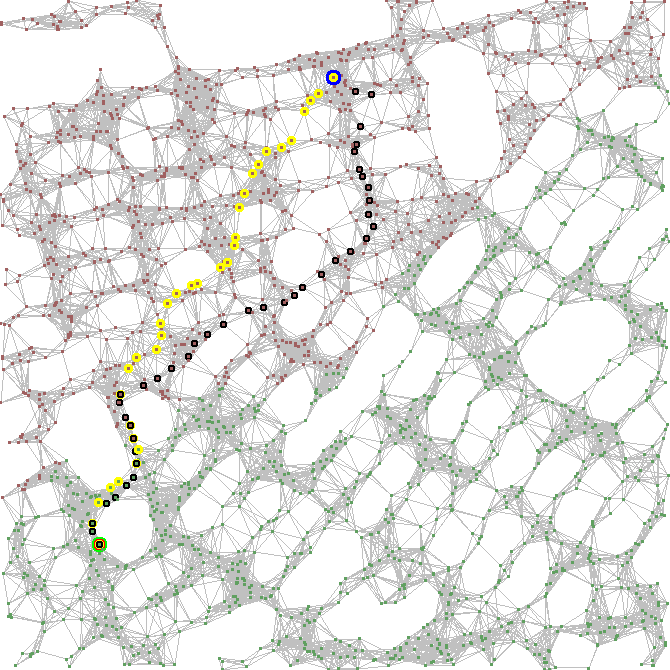,width=200pt} 
\hfill
\null
\caption{\label{fig05} Left: Streets-mask, lat. $40.70^\circ$, lon. $-73.93^\circ$; Right: routing in a network generated with streets-mask and density $\delta= 4.5 \cdot 10^{-3}$}
\end{figure}


We use the energy function
\begin{equation}
\label{equ01}
\omega(u,v) = 400 + d^e(u,v)^2
\end{equation}
for every connection $\{u,v\} \in E$, such that the characteristic distance is $d^{\star} = 20~m$.

Each of the Figures \ref{fig03}, \ref{fig05}, and \ref{fig07} shows to the right two routing paths. The start node is encircled green, the destination node is encircled blue. The nodes traversed by $\BR$ are colored black. The nodes traversed by $\GO$ are colored yellow. The dead-end nodes of $\GO$ are colored red. The small light green (light red) nodes have a virtual address starting with 0 (with 1, respectively).

\begin{figure}[hbt]
\center
\hfill
\epsfig{figure=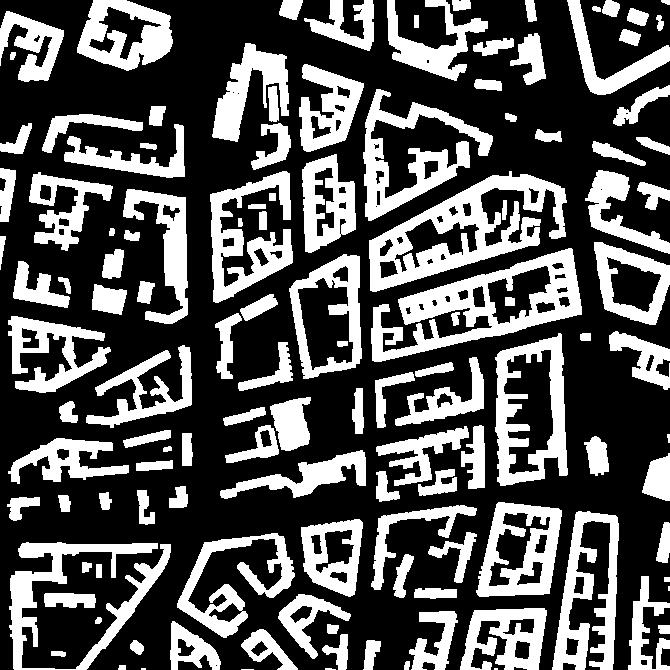,width=200pt}
\hfill
\epsfig{figure=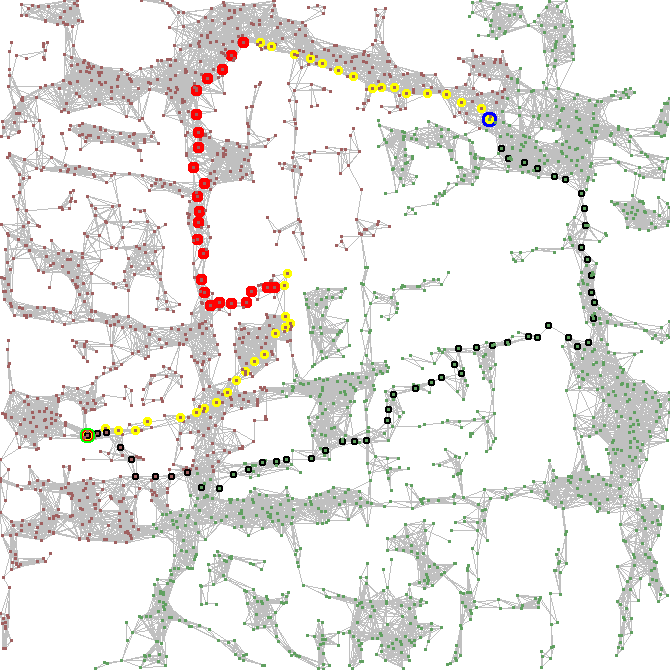,width=200pt} 
\hfill
\null
\caption{\label{fig07} Left: Buildings-mask, lat. $52.50^\circ$, lon. $13.35^\circ$; Right: Routing in a network generated with building-mask, only visible connections, and density $\delta= 4.5 \cdot 10^{-3}$}
\end{figure}


For every node density $\delta$ between $0.5 \cdot 10^{-3}$ and $9.2 \cdot 10^{-3}$ in steps defined by factor $1.2$, we randomly create 1000 networks. For every network, we randomly selected 1000 source nodes and 1000 target nodes. Let $C(\BR)$ and $C(\OR)$ be the sum of the costs to route from the 1000 source nodes to the corresponding 1000 target nodes in all 1000 networks using $\BR$ and $\OR$, respectively. The cost of a route is the sum of the weights of the used connections. The {\em overhead} of $\BR$ is defined by
\[\frac{C(\BR)-C(\OR)}{C(\OR)}.\]
It is defined in the same way for the other routing protocols $\LO$, $\LB$, $\GO$, and $\GB$.

\subsection {Evaluations}

The tables \ref{tab01}, \ref{tab02}, \ref{tab03}, and \ref{tab04} show the average overhead in percent as a function of the node density $\delta$. The overhead entries are colored continuously between green (0\%) and red (50\%). We think that it is more valuable to present the main results in a table than in a graphical illustration, because it is easy to create a graphical view from the values of the table, but not vice versa. 

Table \ref{tab01} considers networks where the sensor nodes are uniformly distributed. If the node density is greater than or equal to $1.0 \cdot 10^{-3}$, the two greedy routing algorithms $\GO$ and $\GB$ on physical coordinates have less overhead than the greedy routing algorithms $\LO$ and $\LB$ on virtual coordinates. For less dense networks ($\delta < 1.0 \cdot 10^{-3}$), $\BR$ has even less overhead than $\GO$, $\GB$, $\LO$, and $\LB$. The difference between a shortest path dead-end handling and a $\BR$ dead-end handling, i.e., the difference between $\GO$ and $\GB$ and between $\LO$ and $\LB$, is only a few percent. This holds for greedy routing with physical coordinates as well as for greedy routing with virtual coordinates.

\begin{table}[hbt]
\center
\small
\hfill
\begin{tabular}{|c|c|c|c|c|c|}
\hline
$\delta$ & $\BR$ & $\LO$ & $\LB$ & $\GO$ & $\GB$ \\
\hline
0.5 & \cellcolor{c005}  2.94 & \cellcolor{c040} 20.06 & \cellcolor{c042} 21.20 & \cellcolor{c052} 26.01 & \cellcolor{c054} 27.29 \\
0.6 & \cellcolor{c011}  5.83 & \cellcolor{c044} 22.10 & \cellcolor{c049} 24.63 & \cellcolor{c054} 27.17 & \cellcolor{c060} 30.17 \\
0.7 & \cellcolor{c023} 11.52 & \cellcolor{c044} 22.11 & \cellcolor{c053} 26.69 & \cellcolor{c053} 26.84 & \cellcolor{c065} 32.54 \\
0.9 & \cellcolor{c033} 16.87 & \cellcolor{c034} 17.15 & \cellcolor{c042} 21.35 & \cellcolor{c043} 21.92 & \cellcolor{c056} 28.22 \\
1.0 & \cellcolor{c035} 17.92 & \cellcolor{c024} 12.10 & \cellcolor{c028} 14.32 & \cellcolor{c031} 15.60 & \cellcolor{c039} 19.63 \\
1.2 & \cellcolor{c035} 17.96 & \cellcolor{c019}  9.59 & \cellcolor{c021} 10.55 & \cellcolor{c021} 10.91 & \cellcolor{c025} 12.76 \\
1.5 & \cellcolor{c036} 18.02 & \cellcolor{c016}  8.27 & \cellcolor{c017}  8.64 & \cellcolor{c015}  7.98 & \cellcolor{c017}  8.71 \\
1.8 & \cellcolor{c035} 18.05 & \cellcolor{c014}  7.38 & \cellcolor{c015}  7.51 & \cellcolor{c012}  6.27 & \cellcolor{c012}  6.49 \\
2.1 & \cellcolor{c036} 18.09 & \cellcolor{c013}  6.76 & \cellcolor{c013}  6.80 & \cellcolor{c010}  5.20 & \cellcolor{c010}  5.25 \\
2.6 & \cellcolor{c036} 18.21 & \cellcolor{c012}  6.23 & \cellcolor{c012}  6.24 & \cellcolor{c008}  4.49 & \cellcolor{c008}  4.50 \\
3.1 & \cellcolor{c036} 18.33 & \cellcolor{c011}  5.78 & \cellcolor{c011}  5.78 & \cellcolor{c007}  3.94 & \cellcolor{c007}  3.94 \\
3.7 & \cellcolor{c037} 18.59 & \cellcolor{c010}  5.39 & \cellcolor{c010}  5.39 & \cellcolor{c006}  3.47 & \cellcolor{c006}  3.47 \\
4.5 & \cellcolor{c037} 18.82 & \cellcolor{c010}  5.02 & \cellcolor{c010}  5.02 & \cellcolor{c006}  3.02 & \cellcolor{c006}  3.02 \\
5.3 & \cellcolor{c038} 19.08 & \cellcolor{c009}  4.87 & \cellcolor{c009}  4.87 & \cellcolor{c005}  2.61 & \cellcolor{c005}  2.61 \\
6.4 & \cellcolor{c038} 19.32 & \cellcolor{c009}  4.94 & \cellcolor{c009}  4.95 & \cellcolor{c004}  2.26 & \cellcolor{c004}  2.26 \\
7.7 & \cellcolor{c039} 19.58 & \cellcolor{c010}  5.15 & \cellcolor{c010}  5.15 & \cellcolor{c003}  1.95 & \cellcolor{c003}  1.95 \\
9.2 & \cellcolor{c039} 19.82 & \cellcolor{c010}  5.29 & \cellcolor{c010}  5.29 & \cellcolor{c003}  1.69 & \cellcolor{c003}  1.69 \\
\hline
\end{tabular}
\hfill 
\begin{tabular}{|c|r|r|r|r|r|}
\hline
$\delta$ & \multicolumn{1}{c|}{$n$} & \multicolumn{1}{c|}{$|\alpha|$} & \multicolumn{1}{c|}{$|\text{ID}|$} & \multicolumn{1}{c|}{$\gamma ~ \LR$} & \multicolumn{1}{c|}{$\gamma ~ \GR$} \\
\hline
0.5 &  357 & 12.01 &  9 & 60.77 & 84.57 \\
0.6 &  474 & 12.68 &  9 & 62.87 & 83.94 \\
0.7 &  661 & 13.42 & 10 & 60.21 & 80.49 \\
0.9 &  845 & 13.85 & 10 & 46.24 & 70.31 \\
1.0 & 1030 & 14.10 & 11 & 28.25 & 53.14 \\
1.2 & 1242 & 14.31 & 11 & 15.11 & 32.90 \\
1.5 & 1491 & 14.51 & 11 &  7.32 & 16.58 \\
1.8 & 1790 & 14.77 & 11 &  3.03 &  6.60 \\
2.1 & 2149 & 15.07 & 12 &  1.09 &  2.00 \\
2.6 & 2579 & 15.28 & 12 &  0.32 &  0.50 \\
3.1 & 3095 & 15.59 & 12 &  0.08 &  0.08 \\
3.7 & 3715 & 15.92 & 12 &  0.06 &  0.01 \\
4.5 & 4458 & 16.16 & 13 &  0.04 &  0.00 \\
5.3 & 5349 & 16.43 & 13 &  0.06 &  0.00 \\
6.4 & 6419 & 16.76 & 13 &  0.09 &  0.00 \\
7.7 & 7703 & 17.10 & 13 &  0.12 &  0.00 \\
9.2 & 9244 & 17.29 & 14 &  0.15 &  0.00 \\
\hline
\end{tabular}
\hfill
\null
\bigskip
\caption{\label{tab01} Left: Average overhead in percent as a function of density $\delta$; Right: The average number $n$ of nodes, the average address length $|\alpha|$, the size of the IDs ($= \lceil \log_2 n \rceil$), and the average number $\gamma$ of routes that had at least one dead-end in percent as a function of density $\delta$}
\end{table}


The average address lengths and the average number $\gamma$ of routes that have at least one dead-end are shown in Table \ref{tab01}, \ref{tab02}, \ref{tab03}, and \ref{tab04} to the right. These tables show that the sizes of the virtual addresses are only a few bits larger that the sizes of the IDs, which are at least $\lceil \log_2 n \rceil$.

$\BR$ seems to be well suited for resolving the dead-end problem. The advantage of $\BR$ is the very good performance in particular for sparse networks and for networks with obstacles. This show Table \ref{tab02}, \ref{tab03}, and \ref{tab04}.

\begin{table}[hbt]
\center
\small
\hfill
\begin{tabular}{|c|c|c|c|c|c|}
\hline
$\delta$ & $\BR$ & $\LO$ & $\LB$ & $\GO$ & $\GB$ \\
\hline
0.5 & \cellcolor{c003}  1.76 & \cellcolor{c035} 17.64 & \cellcolor{c036} 18.31 & \cellcolor{c050} 25.00 & \cellcolor{c051} 25.53 \\
0.6 & \cellcolor{c005}  2.96 & \cellcolor{c038} 19.49 & \cellcolor{c041} 20.67 & \cellcolor{c052} 26.29 & \cellcolor{c054} 27.45 \\
0.7 & \cellcolor{c012}  6.26 & \cellcolor{c042} 21.32 & \cellcolor{c048} 24.02 & \cellcolor{c054} 27.37 & \cellcolor{c060} 30.40 \\
0.9 & \cellcolor{c024} 12.21 & \cellcolor{c039} 19.99 & \cellcolor{c049} 24.61 & \cellcolor{c053} 26.70 & \cellcolor{c065} 32.67 \\
1.0 & \cellcolor{c033} 16.89 & \cellcolor{c030} 15.31 & \cellcolor{c039} 19.62 & \cellcolor{c044} 22.31 & \cellcolor{c057} 28.98 \\
1.2 & \cellcolor{c036} 18.13 & \cellcolor{c024} 12.49 & \cellcolor{c030} 15.43 & \cellcolor{c034} 17.47 & \cellcolor{c045} 22.64 \\
1.5 & \cellcolor{c036} 18.22 & \cellcolor{c021} 10.80 & \cellcolor{c025} 12.65 & \cellcolor{c027} 13.82 & \cellcolor{c034} 17.35 \\
1.8 & \cellcolor{c036} 18.27 & \cellcolor{c019}  9.94 & \cellcolor{c022} 11.19 & \cellcolor{c022} 11.32 & \cellcolor{c027} 13.73 \\
2.1 & \cellcolor{c036} 18.30 & \cellcolor{c018}  9.35 & \cellcolor{c020} 10.20 & \cellcolor{c019}  9.73 & \cellcolor{c022} 11.49 \\
2.6 & \cellcolor{c036} 18.38 & \cellcolor{c017}  8.72 & \cellcolor{c018}  9.32 & \cellcolor{c016}  8.48 & \cellcolor{c019}  9.73 \\
3.1 & \cellcolor{c036} 18.45 & \cellcolor{c016}  8.19 & \cellcolor{c017}  8.60 & \cellcolor{c014}  7.50 & \cellcolor{c016}  8.41 \\
3.7 & \cellcolor{c037} 18.50 & \cellcolor{c015}  7.87 & \cellcolor{c016}  8.20 & \cellcolor{c013}  6.82 & \cellcolor{c014}  7.49 \\
4.5 & \cellcolor{c037} 18.61 & \cellcolor{c014}  7.48 & \cellcolor{c015}  7.74 & \cellcolor{c012}  6.24 & \cellcolor{c013}  6.74 \\
5.3 & \cellcolor{c037} 18.72 & \cellcolor{c015}  7.54 & \cellcolor{c015}  7.80 & \cellcolor{c011}  5.78 & \cellcolor{c012}  6.19 \\
6.4 & \cellcolor{c037} 18.88 & \cellcolor{c014}  7.36 & \cellcolor{c015}  7.62 & \cellcolor{c010}  5.37 & \cellcolor{c011}  5.72 \\
7.7 & \cellcolor{c037} 18.97 & \cellcolor{c015}  7.81 & \cellcolor{c016}  8.13 & \cellcolor{c010}  5.04 & \cellcolor{c010}  5.34 \\
9.2 & \cellcolor{c038} 19.10 & \cellcolor{c016}  8.24 & \cellcolor{c017}  8.60 & \cellcolor{c009}  4.76 & \cellcolor{c010}  5.02 \\
\hline
\end{tabular}
\hfill
\begin{tabular}{|c|r|r|r|r|r|}
\hline
$\delta$ & \multicolumn{1}{c|}{$n$} & \multicolumn{1}{c|}{$|\alpha|$} & \multicolumn{1}{c|}{$|\text{ID}|$} & \multicolumn{1}{c|}{$\gamma ~ \LR$} & \multicolumn{1}{c|}{$\gamma ~ \GR$} \\
\hline
0.5 &  332 & 11.76 &  9 & 59.26 & 84.50 \\
0.6 &  429 & 12.38 &  9 & 61.74 & 84.49 \\
0.7 &  595 & 13.13 & 10 & 63.19 & 83.25 \\
0.9 &  782 & 13.61 & 10 & 56.57 & 78.09 \\
1.0 &  962 & 14.02 & 10 & 40.57 & 67.01 \\
1.2 & 1161 & 14.26 & 11 & 27.12 & 53.15 \\
1.5 & 1395 & 14.46 & 11 & 17.72 & 39.68 \\
1.8 & 1675 & 14.77 & 11 & 12.08 & 28.91 \\
2.1 & 2011 & 15.05 & 11 &  8.70 & 21.89 \\
2.6 & 2414 & 15.32 & 12 &  6.55 & 16.78 \\
3.1 & 2896 & 15.59 & 12 &  5.14 & 13.17 \\
3.7 & 3475 & 15.99 & 12 &  4.47 & 10.67 \\
4.5 & 4172 & 16.24 & 13 &  3.85 &  8.66 \\
5.3 & 5006 & 16.55 & 13 &  3.80 &  7.35 \\
6.4 & 6008 & 16.93 & 13 &  3.55 &  6.35 \\
7.7 & 7210 & 17.19 & 13 &  3.80 &  5.61 \\
9.2 & 8651 & 17.48 & 14 &  3.97 &  5.00 \\
\hline
\end{tabular}
\hfill
\null
\bigskip
\caption{\label{tab02} Left: Average overhead in percent as a function of density $\delta$ with lakes-mask; Right: Same as Table \ref{tab01} for lakes-mask}
\end{table}


\begin{table}[hbt]
\center
\small
\hfill
\begin{tabular}{|c|c|c|c|c|c|}
\hline
$\delta$ & $\BR$ & $\LO$ & $\LB$ & $\GO$ & $\GB$ \\
\hline
1.2 & \cellcolor{c008}  4.08 & \cellcolor{c043} 21.59 & \cellcolor{c046} 23.23 & \cellcolor{c055} 27.53 & \cellcolor{c058} 29.35 \\
1.5 & \cellcolor{c015}  7.57 & \cellcolor{c046} 23.27 & \cellcolor{c052} 26.50 & \cellcolor{c057} 28.56 & \cellcolor{c064} 32.14 \\
1.8 & \cellcolor{c025} 12.87 & \cellcolor{c045} 22.53 & \cellcolor{c054} 27.46 & \cellcolor{c054} 27.18 & \cellcolor{c066} 33.08 \\
2.1 & \cellcolor{c033} 16.70 & \cellcolor{c036} 18.36 & \cellcolor{c045} 22.89 & \cellcolor{c045} 22.72 & \cellcolor{c057} 28.67 \\
2.6 & \cellcolor{c035} 17.59 & \cellcolor{c029} 14.61 & \cellcolor{c035} 17.67 & \cellcolor{c035} 17.82 & \cellcolor{c044} 22.09 \\
3.1 & \cellcolor{c035} 17.81 & \cellcolor{c025} 12.62 & \cellcolor{c029} 14.58 & \cellcolor{c028} 14.23 & \cellcolor{c033} 16.87 \\
3.7 & \cellcolor{c035} 17.95 & \cellcolor{c023} 11.67 & \cellcolor{c025} 12.96 & \cellcolor{c023} 11.87 & \cellcolor{c026} 13.38 \\
4.5 & \cellcolor{c036} 18.03 & \cellcolor{c022} 11.09 & \cellcolor{c023} 11.98 & \cellcolor{c020} 10.42 & \cellcolor{c022} 11.34 \\
5.3 & \cellcolor{c036} 18.14 & \cellcolor{c021} 10.76 & \cellcolor{c022} 11.42 & \cellcolor{c019}  9.62 & \cellcolor{c020} 10.19 \\
6.4 & \cellcolor{c036} 18.19 & \cellcolor{c020} 10.40 & \cellcolor{c021} 10.87 & \cellcolor{c018}  9.06 & \cellcolor{c018}  9.46 \\
7.7 & \cellcolor{c036} 18.26 & \cellcolor{c020} 10.18 & \cellcolor{c021} 10.53 & \cellcolor{c017}  8.63 & \cellcolor{c017}  8.94 \\
9.2 & \cellcolor{c036} 18.18 & \cellcolor{c019}  9.92 & \cellcolor{c020} 10.20 & \cellcolor{c016}  8.26 & \cellcolor{c017}  8.52 \\
\hline
\end{tabular}
\hfill
\begin{tabular}{|c|r|r|r|r|r|}
\hline
$\delta$ & \multicolumn{1}{c|}{$n$} & \multicolumn{1}{c|}{$|\alpha|$} & \multicolumn{1}{c|}{$|\text{ID}|$} & \multicolumn{1}{c|}{$\gamma ~ \LR$} & \multicolumn{1}{c|}{$\gamma ~ \GR$} \\
\hline
1.2 &  463 & 12.75 &  9 & 62.31 & 83.56 \\
1.5 &  617 & 13.43 & 10 & 63.20 & 82.24 \\
1.8 &  827 & 14.10 & 10 & 59.70 & 77.87 \\
2.1 & 1038 & 14.50 & 11 & 48.16 & 68.16 \\
2.6 & 1260 & 14.83 & 11 & 35.61 & 54.84 \\
3.1 & 1520 & 15.15 & 11 & 25.48 & 40.52 \\
3.7 & 1830 & 15.43 & 11 & 18.41 & 27.57 \\
4.5 & 2199 & 15.79 & 12 & 13.40 & 18.31 \\
5.3 & 2643 & 16.14 & 12 & 10.26 & 12.42 \\
6.4 & 3171 & 16.47 & 12 &  7.67 &  9.08 \\
7.7 & 3807 & 16.85 & 12 &  6.12 &  7.12 \\
9.2 & 4566 & 17.15 & 13 &  4.79 &  5.97 \\
\hline
\end{tabular}
\hfill
\null
\bigskip
\caption{\label{tab03} Left: Average overhead in percent as a function of density $\delta$ with streets-mask; Right: Same as Table \ref{tab01} for streets-mask}
\end{table}


\begin{table}[hbt]
\center
\small
\hfill
\begin{tabular}{|c|c|c|c|c|c|}
\hline
$\delta$ & $\BR$ & $\LO$ & $\LB$ & $\GO$ & $\GB$ \\
\hline
1.8 & \cellcolor{c005}  2.95 & \cellcolor{c048} 24.06 & \cellcolor{c051} 25.55 & \cellcolor{c055} 27.79 & \cellcolor{c058} 29.29 \\
2.1 & \cellcolor{c009}  4.70 & \cellcolor{c050} 25.40 & \cellcolor{c055} 27.76 & \cellcolor{c058} 29.37 & \cellcolor{c064} 32.01 \\
2.6 & \cellcolor{c015}  7.56 & \cellcolor{c052} 26.45 & \cellcolor{c060} 30.27 & \cellcolor{c062} 31.14 & \cellcolor{c071} 35.71 \\
3.1 & \cellcolor{c023} 11.90 & \cellcolor{c054} 27.01 & \cellcolor{c065} 32.89 & \cellcolor{c065} 32.67 & \cellcolor{c080} 40.11 \\
3.7 & \cellcolor{c032} 16.40 & \cellcolor{c054} 27.03 & \cellcolor{c069} 34.64 & \cellcolor{c066} 33.05 & \cellcolor{c086} 43.29 \\
4.5 & \cellcolor{c038} 19.10 & \cellcolor{c052} 26.31 & \cellcolor{c068} 34.17 & \cellcolor{c064} 32.49 & \cellcolor{c087} 43.86 \\
5.3 & \cellcolor{c041} 20.71 & \cellcolor{c051} 25.60 & \cellcolor{c065} 32.98 & \cellcolor{c063} 31.51 & \cellcolor{c086} 43.04 \\
6.4 & \cellcolor{c042} 21.20 & \cellcolor{c049} 24.89 & \cellcolor{c062} 31.38 & \cellcolor{c060} 30.40 & \cellcolor{c082} 41.42 \\
7.7 & \cellcolor{c042} 21.31 & \cellcolor{c049} 24.71 & \cellcolor{c061} 30.87 & \cellcolor{c059} 29.97 & \cellcolor{c080} 40.36 \\
9.2 & \cellcolor{c042} 21.37 & \cellcolor{c049} 24.59 & \cellcolor{c061} 30.57 & \cellcolor{c059} 29.61 & \cellcolor{c079} 39.55 \\
\hline
\end{tabular}
\hfill
\begin{tabular}{|c|r|r|r|r|r|}
\hline
$\delta$ & \multicolumn{1}{c|}{$n$} & \multicolumn{1}{c|}{$|\alpha|$} & \multicolumn{1}{c|}{$|\text{ID}|$} & \multicolumn{1}{c|}{$\gamma ~ \LR$} & \multicolumn{1}{c|}{$\gamma ~ \GR$} \\
\hline
1.8 &  816 & 14.17 & 10 & 68.37 & 88.43 \\
2.1 &  992 & 14.65 & 10 & 68.34 & 87.93 \\
2.6 & 1236 & 15.08 & 11 & 67.71 & 87.58 \\
3.1 & 1565 & 15.57 & 11 & 66.05 & 87.19 \\
3.7 & 1971 & 16.08 & 11 & 63.91 & 86.45 \\
4.5 & 2431 & 16.49 & 12 & 61.17 & 85.46 \\
5.3 & 2976 & 16.82 & 12 & 58.51 & 84.32 \\
6.4 & 3621 & 17.21 & 12 & 56.30 & 83.07 \\
7.7 & 4404 & 17.51 & 13 & 55.12 & 82.15 \\
9.2 & 5329 & 17.91 & 13 & 54.50 & 81.36 \\
\hline
\end{tabular}
\hfill
\null
\bigskip
\caption{\label{tab04} Left: Average overhead in percent as a function of density $\delta$ with  buildings-mask and only visible connections; Right: Same as Table \ref{tab01} for buildings-mask and only visible connections}
\end{table}


\subsection{Coarsening the edge weights}

During the initialization phase, the nodes have to store and manage distance values to landmark nodes. The distances are built by sums of edge weights. Our weights of the connections defined in equation \ref{equ01} can simply be coarsed to only $k$ different values by
\begin{equation}
\label{equ02}
\omega_k(u,v) = \left\lceil k \cdot \frac{400 + d^e(u,v)^2}{400 + r^2} \right\rceil.
\end{equation}
If we use weight function $\omega_k$ instead of $\omega$, the overhead of $\BR$ will increase, depending on the number $k$ of different weights for the connections. In Table \ref{tab09} it is shown how the overhead increases when the number of different weights decreases. If we have the same weight for all connections, the algorithm considers hop distances. Table \ref{tab09} to the left shows that 8 different weights for the connections will decrease the overhead of $\BR$ only by a few percent. The table also shows that the denser the network is, the larger the overhead becomes. However, if the network is dense, the greedy algorithms reach a dead-end less often. If $\BR$ is used only to guarantee delivery for greedy routing protocols, the overhead will increase only minimal.

\begin{table}[hbt]
\center
\small
\hfill
\begin{tabular}{|c|r|r|r|r|r|r|}
\hline
$\delta$ & \multicolumn{1}{c|}{$\omega$} & \multicolumn{1}{c|}{$\omega_{16}$} & \multicolumn{1}{c|}{$\omega_{8}$} & \multicolumn{1}{c|}{$\omega_{4}$} & \multicolumn{1}{c|}{$\omega_{2}$} & \multicolumn{1}{c|}{$\omega_1$} \\
\hline
 0.5 &  2.94 &  3.02 &  3.25 &  4.05 &  5.08 &  7.44 \\
 0.6 &  5.83 &  5.89 &  6.22 &  7.32 &  8.67 & 11.40 \\ 
 0.7 & 11.52 & 11.77 & 12.22 & 13.68 & 15.26 & 19.82 \\
 0.9 & 16.87 & 17.06 & 17.62 & 19.02 & 21.79 & 27.62 \\
 1.0 & 17.92 & 18.43 & 19.12 & 21.37 & 23.95 & 31.33 \\
 1.2 & 17.96 & 18.47 & 19.43 & 22.37 & 25.16 & 34.49 \\
 1.5 & 18.02 & 18.62 & 19.83 & 23.21 & 26.06 & 37.34 \\
 1.8 & 18.05 & 18.88 & 20.27 & 24.12 & 26.85 & 40.39 \\
 2.1 & 18.09 & 19.17 & 20.76 & 25.01 & 27.37 & 43.21 \\
 2.6 & 18.21 & 19.49 & 21.26 & 25.93 & 27.71 & 45.78 \\
 3.1 & 18.33 & 19.86 & 21.78 & 26.66 & 27.87 & 48.22 \\
 3.7 & 18.59 & 20.26 & 22.22 & 27.29 & 27.90 & 50.41 \\
 4.5 & 18.82 & 20.67 & 22.54 & 27.76 & 27.92 & 52.30 \\
 5.3 & 19.08 & 21.06 & 23.04 & 28.09 & 28.02 & 53.97 \\
 6.4 & 19.32 & 21.43 & 23.30 & 28.31 & 28.09 & 55.40 \\
 7.7 & 19.58 & 21.75 & 23.53 & 28.45 & 28.26 & 56.80 \\
 9.2 & 19.82 & 22.02 & 23.67 & 28.46 & 28.54 & 57.67 \\
\hline
\end{tabular}
\hfill
\begin{tabular}{|c|r|r|}
\hline
 $|\alpha|$ & transmissions & sub-networks \\
\hline
    &  82205 &    1 \\
 \hline
  1 & 137658 &    1 \\ 
  2 & 164885 &    2 \\
  3 & 149594 &    4 \\
  4 & 123651 &    8 \\
  5 & 115242 &   16 \\
  6 & 109842 &   32 \\
  7 & 108777 &   64 \\
  8 & 101131 &  128 \\
  9 &  99712 &  255 \\
 10 &  95281 &  466 \\
 11 &  84330 &  699 \\
 12 &  55758 &  561 \\
 13 &  21186 &  234 \\
 14 &   4668 &   54 \\
 15 &    366 &    5 \\
\hline
\end{tabular}
\hfill
\null
\bigskip
\caption{\label{tab09} Left: Average overhead for coarsened edge weights, Right: Number of transmissions during the preprocessing phase for a network with 2500 nodes and 23462 edges}
\end{table}

\subsection {Energy to setup the data structure}

During the setup of the initial data structure, the network is flooded several times. To determine the distances to the arbitrarily chosen start node $w$, the complete network is flooded. This is also the case for determining the distances to the landmark nodes $x_0$ and $x_1$. After that the network is flooded only partially, because the distances to landmark nodes $x_{00}$ and $x_{01}$ ($x_{10}$ and $x_{11}$) are only relevant for the nodes having a virtual address starting with $0$ (with $1$, respectively). In general, the total amount of energy to determine all distances to the $2 \cdot 2^k$ landmark nodes with the same address length $k$ is less than the amount of energy to flood the network two times. Our experimental analyses show that the amount of energy necessary for the next bipartition decreases for increasing address lengths.


Table \ref{tab09} to the right shows the number of transmissions to build the initial data structure. This is not the average over 1000 networks but only one typical example. The network is created with a node density of $2.5 \cdot 10^{-3}$ nodes per $m^2$ without any masks. It has 2500 nodes and 23462 edges. The first line shows the total number of transmissions for flooding the network from the arbitrarily chosen node $w$. The second line shows the total number of transmissions for flooding the network from the two landmark nodes $x_0$ and $x_1$. The third line shows the total number of transmissions for flooding two sub-networks from four landmark nodes $x_{00}$, $x_{01}$, $x_{10}$, and $x_{11}$, and so on.

\section {Conclusions}

Greedy algorithms for routing in wireless ad hoc sensor networks are easy to implement. They are very effective but can unfortunately reach an impasse. In this paper, we have introduced and analyzed a very simple hierarchical bipartition technique for wireless ad hoc sensor networks. Every node gets a unique virtual address that can be used to route through the network with delivery guarantee. To keep the advantages of greedy routing, we suggest to use $\BR$ for finding the way out of a dead-end. This is especially very interesting if the network is sparse or contains obstacles. In these cases, the probability to reach a dead-end is very high. Our experimental evaluations even show the following: If more than 50\% of the routes reach at least one dead-end, then the performance of stand-alone $\BR$ is in general better than the performance of geographic greedy routing with a shortest path dead-end handling.

The main assets and drawbacks of HBR are the following:

{\center
\begin{tabular}{p{220pt}p{220pt}}
Assets: & Drawbacks: \\
\begin{enumerate}
\item packet delivery guaranty
\item no packet overhead, because the unique virtual addresses are of size $\log n$ on average
\item small routing tables
\item no geographic coordinates necessary
\end{enumerate} &
\begin{enumerate}
\item works only for static network structures and is vulnerable to even small changes of topology
\item relative time-consuming and energy-consuming set-up phase
\item unbalanced partitioning process results in large address lengths
\item unbounded worst-case stretch factor
\end{enumerate}
\end{tabular}}

\bibliographystyle{plain}
\bibliography{../Sensornetzwerke.bib}

\end{document}